\newtheorem{Prop}{PROPOSITION}
\newtheorem{Rem}{REMARK}
\newcommand{\beq}{\begin{equation}}
\newcommand{\eeq}{\end{equation}}
\newcommand{\pa}{\partial}
 \DeclareMathOperator{\im}{Im}
 \DeclareMathOperator{\hess}{hess}
\DeclareMathOperator{\pf}{pf} \DeclareMathOperator{\tr}{tr}
\DeclareMathOperator{\vol}{vol}
\begin{document}

\title{\textsf{Monge-Amp\`ere Structures and the \\ Geometry of Incompressible Flows}}
\author{B. Banos\thanks{LMBA, Universit\'e de Bretagne Sud, Campus de Tohannic  
BP 573, 56017 Vannes, France},  
V. N. Roubtsov\thanks{UNAM, LAREMA, D\'epartment de Math\'ematiques, Universit\'e d'Angers, 2 Blvd Lavoisier, 
49045 Angers, France; \& Theory Division, ITEP, 25, 
Bol. Tcheremushkinskaya, Moscow 117259, Russia}, \& I. Roulstone\thanks{Department of Mathematics, University of Surrey, 
Guildford GU2 7XH, UK. {\em email}: i.roulstone@surrey.ac.uk}}

\date{\today}

\maketitle

\begin{abstract}
We show how a symmetry reduction of the equations for incompressible hydrodynamics in three dimensions leads naturally to Monge-Amp\`ere structure, and Burgers'-type vortices are a canonical class of solutions associated with this structure. The mapping of such solutions, which are characterised by a linear dependence of the third component of the velocity on the coordinate defining the axis of rotation, to solutions of the incompressible equations in two dimensions is also shown to be an example of a symmetry reduction. The Monge-Amp\`ere structure for incompressible flow in two dimensions is shown to be hyper-symplectic.
\end{abstract}

\section{Introduction}

The incompressible Euler and Navier--Stokes equations pose an enduring challenge for mathematical analysis. Previous works by Roubtsov and  Roulstone (1997, 2001), Delahaies and Roulstone (2010), and by McIntyre and Roulstone (2002), have shown how quite sophisticated  ideas from modern differential  geometry offer a framework  for understanding how coherent structures in large-scale atmosphere-ocean flows, such as cyclones and fronts, may be modelled using a hierarchy of approximations to the Navier--Stokes equations. These studies focussed  on semi-geostrophic (SG)  theory and  related models. A Monge-Amp\`ere (MA) equation lies at the heart of these models, and it was observed that geometry is the natural language for describing the interconnections  between  the  transformation theory of the MA equation (Kushner {\em et al.} 2007), the  Hamiltonian properties of fluid dynamical  models, and particular features of solutions such as stability and discontinuity. 

In this  paper, we extend the more recent work of  Roulstone {\em et al.} (2009a,b) on the Euler  and Navier--Stokes  equations, and  show how the geometric approach  to SG  theory may be applied to the incompressible fluid equations in three dimensions: in particular, we show how vortex flows of Burgers' type are described by a Monge-Amp\`ere structure. We also show how the partial differential equations that describe such canonical flows are obtained via a symmetry reduction.

This paper is organised as follows. We start by considering Monge-Amp\`ere structures in four dimensions and incompressible flow in two dimensions. 
In Section~3, we show how a hyper-symplectic structure is associated with these equations. 
In Section~4, we review Monge-Amp\`ere structures in six dimensions and the reduction principle. 
In Section~5 we show how Burger's vortices in three-dimensional flow relate to Monge-Amp\`ere structures in six dimensions. 
We then explain how a symmetry reduction maps this to a Monge-Amp\`ere, hyper-symplectic, structure in four dimensions. 
This reduction is surely known to fluid dynamicists in the context of the Lundgren transformation (Lundgren 1982). 
In Section~6 we show how a symmetry reduction of the incompressible Euler equations in three dimensions yields the generic geometric 
framework exploited in Section~5. A summary is given in Section~7.

\section{The geometry of Monge-Amp\`ere equations}

\subsection{Monge-Amp\`ere structures}

Lychagin (1979) proposed a geometric approach to to study of MA equations using differential forms. The idea is to associate with a form $\omega\in
\Lambda^n(T^*\mathbb{R}^n)$, where $\Lambda^n$ denotes the space of differential $n$-forms 
on $T^*\mathbb{R}^n$, the MA equation $\Delta_\omega=0$, where  
$\Delta_{\omega} : C^\infty(\mathbb{R}^n)\rightarrow
\Omega^n(\mathbb{R}^n)\cong C^\infty(\mathbb{R}^n)$ is the
differential operator defined by
$$
\Delta_{\omega}(\phi)=(d\phi)^*\omega\,,
$$
and $(d\phi)^*\omega$ denotes the restriction of $\omega$ to the graph of $d\phi$ 
($d\phi: \mathbb{R}^n\rightarrow T^*\mathbb{R}^n$ is the differential of $\phi$).  

Following Lychagin {\em et al.} (1993), a MA equation in $n$
variables can be described by a pair $(\Omega,\omega)\in
\Lambda^2(T^*\mathbb{R}^n)\times \Lambda^n(T^*\mathbb{R}^n)$ such
that
\begin{enumerate}[i)]
\item $\Omega$ is symplectic that is nondegenerate and closed
\item $\omega$ is effective, that is $\omega\wedge\Omega=0$.
\end{enumerate}
Such a pair is called a Monge-Amp\`ere structure.

A  generalized  solution of a MA equation is a lagrangian submanifold $L$, on which $\omega$ vanishes; that is, $L$ is an $n$-dimensional 
submanifold such that $\Omega|_L=0$ and $\omega|_L=0$. Note  that  if $\phi(x)$  is  a
regular solution then  its graph $L_\phi=\{(x,\phi_x)\}$ in  the phase space
is a generalized solution. 


In four dimensions (that is $n=2$), a geometry defined by this structure can be either
complex or real and this distinction coincides with the
usual distinction between elliptic and hyperbolic differential equations
in two variables. Indeed,  when $\omega\in
\Lambda^2(T^*\mathbb{R}^2)$ is a nondegenerate $2$-form
($\omega\wedge\omega\neq 0$), one can associate with  the
Monge-Amp\`ere structure $(\Omega,\omega)\in
\Lambda^2(T^*\mathbb{R}^2)\times \Lambda^2(T^*\mathbb{R}^2)$ the
tensor $I_\omega$ defined by
\beq
\label{eq:Iw}
\frac{1}{\sqrt{|\pf(\omega)|}}\omega(\cdot,\cdot)=\Omega(I_\omega\cdot,\cdot)
\eeq
where $\pf(\omega)$ is the pfaffian of $\omega$:
$\omega\wedge\omega = \pf(\omega)(\Omega\wedge\Omega)$. This
tensor is either an almost complex structure or an almost product
structure:
\begin{enumerate}[a)]
\item $\Delta_\omega=$ is elliptic $\Leftrightarrow$
$\pf(\omega)>0$ $\Leftrightarrow$ $I_\omega^2=-Id$
\item $\Delta_\omega=$ is hyperbolic $\Leftrightarrow$
$\pf(\omega)<0$ $\Leftrightarrow$ $I_\omega^2=Id$
\end{enumerate}
Lychagin and Rubtsov (1983a,b, 1993) (see also Kushner {\em et al.} (2007) for a comprehensive account of this theory) have explained the
link there is between the problem of local equivalence of MA equations in two
variables and the integrability problem of this tensor $I_\omega$:
\begin{Prop} (Lychagin-Roubtsov theorem) 
The three following assertions are
equivalent:
\begin{enumerate}
\item $\Delta_\omega=0$ is locally equivalent to one of the two
equations
$$
\begin{cases}
\Delta\phi=0&\\
\square\phi=0&\\\end{cases}
$$
\item the almost complex (or product) structure $I_\omega$ is
integrable
\item the form $\frac{\omega}{\sqrt{|\pf(\omega)|}}$ is closed.
\end{enumerate}
\end{Prop}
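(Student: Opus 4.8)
The plan is to prove $(2)\Leftrightarrow(3)$ directly, and then deduce $(1)\Leftrightarrow(3)$ from the normal forms this produces together with the fact that closedness of the normalised form is invariant under equivalence. I would carry out the elliptic case $\pf(\omega)>0$, $I_\omega^2=-\mathrm{Id}$ in detail; the hyperbolic case $I_\omega^2=\mathrm{Id}$ is strictly parallel, with the almost complex structure and its Dolbeault bigrading replaced by the almost product structure $I_\omega$ and the bigrading by its two $\pm1$-eigenbundles, and Newlander--Nirenberg replaced by the Frobenius theorem. Throughout write $\tilde\omega:=\omega/\sqrt{|\pf(\omega)|}$, so that $\tilde\omega(\cdot,\cdot)=\Omega(I_\omega\cdot,\cdot)$; effectiveness gives $\tilde\omega\wedge\Omega=0$ and the definition of the pfaffian gives $\tilde\omega\wedge\tilde\omega=\Omega\wedge\Omega$. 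The one preliminary I would establish is that $\mu:=\Omega-i\tilde\omega$ is a nowhere-vanishing decomposable $(2,0)$-form for $I_\omega$: the identity $\tilde\omega(I_\omega\cdot,\cdot)=\Omega(I_\omega^2\cdot,\cdot)=-\Omega(\cdot,\cdot)$ gives $\mu(I_\omega X,Y)=i\,\mu(X,Y)$, and $\mu\wedge\bar\mu=\Omega\wedge\Omega+\tilde\omega\wedge\tilde\omega=2\,\Omega\wedge\Omega\neq0$ forces $\mu=\theta^1\wedge\theta^2$ for a local coframe $\theta^1,\theta^2$ of $(1,0)$-forms.

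For $(2)\Rightarrow(3)$ the device is that $d\Omega=0$ holds automatically, so $d\mu=-i\,d\tilde\omega$ and $d\bar\mu=-d\mu$. If $I_\omega$ is integrable then $d=\partial+\bar\partial$ respects the bigrading; since $T^*\mathbb{R}^2$ has complex dimension two we have $\Lambda^{3,0}=\Lambda^{0,3}=0$, hence $d\mu=\bar\partial\mu\in\Lambda^{2,1}$ while $d\bar\mu=\partial\bar\mu\in\Lambda^{1,2}$. Combined with $d\mu=-d\bar\mu$ this traps $d\mu$ in $\Lambda^{2,1}\cap\Lambda^{1,2}=0$, so $d\tilde\omega=0$, which is $(3)$. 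For $(3)\Rightarrow(2)$ I would argue the other way: now $d\mu=-i\,d\tilde\omega=0$; expanding $d\mu=d\theta^1\wedge\theta^2-\theta^1\wedge d\theta^2$ in the coframe, its $(1,2)$-component is a linear isomorphism applied to the coefficients of the $(0,2)$-parts of $d\theta^1$ and $d\theta^2$, so $d\mu=0$ forces those $(0,2)$-parts to vanish --- which is exactly the Newlander--Nirenberg integrability criterion for $I_\omega$.

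For $(3)\Rightarrow(1)$: given $(3)$ we also have $(2)$, so there are local holomorphic coordinates $z_1,z_2$, in which the closed $(2,0)$-form reads $\mu=h\,dz_1\wedge dz_2$ with $h$ holomorphic (from $\bar\partial\mu=0$); a holomorphic primitive of $h\,dz_1$ (a complex Darboux step) yields coordinates $w_j=u_j+iv_j$ with $\mu=dw_1\wedge dw_2$, hence $\Omega=\re\mu=du_1\wedge du_2-dv_1\wedge dv_2$ and $\tilde\omega=-\im\mu=-(du_1\wedge dv_2+dv_1\wedge du_2)$. The linear relabelling $(x_1,p_1,x_2,p_2)=(u_2,u_1,-v_2,v_1)$ puts $\Omega$ in canonical Darboux form, and restricting $\tilde\omega$ to the graph of $d\phi$ gives $\Delta_{\tilde\omega}(\phi)=\phi_{x_1x_1}+\phi_{x_2x_2}$, the Laplace equation ($\square\phi=0$ in the hyperbolic case). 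Finally $(1)\Rightarrow(3)$ is quick: an equivalence of Monge-Amp\`ere equations is a (contact) transformation $\Phi$ with $\Phi^*\Omega'=\Omega$, $\Phi^*\omega'=\lambda\,\omega$, $\lambda\neq0$; then $\pf(\omega')\circ\Phi=\lambda^2\,\pf(\omega)$, so $\Phi^*\tilde\omega'=\pm\tilde\omega$, whence $d\tilde\omega'=0\Rightarrow d\tilde\omega=0$, and for the Laplace/wave models $\tilde\omega$ is manifestly closed.

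I expect the real obstacle to be $(2)\Rightarrow(3)$: a priori $\tilde\omega$ is only nondegenerate and effective, not closed, so the content is that integrability of the tensor $I_\omega$ alone should pin down $d\tilde\omega=0$. The mechanism --- $d\Omega=0$ for free, together with the bidegree induced by $I_\omega$ confining $d\tilde\omega$ to $\Lambda^{2,1}\cap\Lambda^{1,2}$ --- works precisely because $\Lambda^{3,0}$ and $\Lambda^{0,3}$ vanish in complex dimension two, which is also the structural reason the elliptic/hyperbolic dichotomy is a four-dimensional phenomenon. A subsidiary point worth checking is that the coordinates obtained in $(3)\Rightarrow(1)$ realise the Laplace/wave Monge-Amp\`ere structure exactly rather than up to a conformal factor; this is automatic since $\pf(\tilde\omega)=\pm1$ is constant by construction.
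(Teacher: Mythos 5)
The paper does not actually prove this Proposition: it is quoted as the Lychagin--Roubtsov theorem with references to Lychagin and Rubtsov (1983a,b, 1993) and Kushner \emph{et al.} (2007), so there is no in-paper argument to compare against line by line. Judged on its own, your proof is essentially correct and is in substance the standard published argument: the complex $2$-form $\mu=\Omega-i\tilde\omega$ is of type $(2,0)$ (your one-sided identity $\mu(I_\omega X,Y)=i\mu(X,Y)$ plus skew-symmetry does force this), it is automatically decomposable since $\Lambda^{2,0}$ has rank one --- consistently with $\mu\wedge\mu=\Omega^2-\tilde\omega^2-2i\,\Omega\wedge\tilde\omega=0$, which follows from effectiveness and $\pf(\tilde\omega)=1$ --- and the bidegree count $\Lambda^{3,0}=\Lambda^{0,3}=0$ in complex dimension two correctly delivers $(2)\Rightarrow(3)$, while the $(1,2)$-component of $d\mu$ in a $(1,0)$-coframe gives the Nijenhuis obstruction for $(3)\Rightarrow(2)$; the hyperbolic case goes through verbatim with $\mu_\pm=\Omega\pm\tilde\omega$, whose kernels are the eigendistributions, closedness plus constant rank giving Frobenius integrability. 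Your route through Newlander--Nirenberg followed by a holomorphic Darboux step is slightly heavier than the classical proof, which normalises the closed decomposable form $\Omega\mp i\tilde\omega$ directly by a Lie--Darboux argument, but it buys nothing wrong. Two small points to tidy: in $(3)\Rightarrow(1)$ your particular relabelling $(x_1,p_1,x_2,p_2)=(u_2,u_1,-v_2,v_1)$ sends $\Omega$ to \emph{minus} the canonical Darboux form (a harmless sign, fixed by flipping one coordinate, e.g. $(x_1,p_1,x_2,p_2)=(u_1,u_2,-v_1,v_2)$, which still restricts $\tilde\omega$ to $\pm\Delta\phi\,dx_1\wedge dx_2$); and in $(1)\Rightarrow(3)$ the equivalence relevant to this symplectic setting is $\Phi^*\Omega'=\Omega$, $\Phi^*\omega'=h\,\omega$ with $h$ a nonvanishing \emph{function}, under which your computation $\Phi^*\tilde\omega'=\pm\tilde\omega$ (sign locally constant) still holds, so the conclusion stands.
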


\section{The  geometry of  the incompressible  Euler equations  in two
  dimensions} 

\subsection{The Poisson equation for the pressure}

The incompressible Euler equations (in two or three dimensions) are
\beq
\label{e1}
\frac{\pa u_i}{\pa t} + u_j u_{ij} + p_i = 0, \;\;\;\;
u_{ii}=0 ,
\eeq
with $ p_i = \pa p/\pa x_i, \;\;u_{ij} = \partial u_i/\partial x_j$, and the summation convention is used.
By applying the incompressibility condition, we find that the Laplacian of the pressure
and the velocity gradients are related by 
\begin{equation}
\label{2DP}
\Delta p = -u_{ij}u_{ji} .
\end{equation}

In two dimensions, the incompressibility condition implies 
$$
u_1 = -\frac{\partial \psi}{\partial x_2} = -\xi_2,\;\;
u_2 = \frac{\partial \psi}{\partial x_1} = \xi_1 ,
$$
where  $\psi(x_1, x_2,  t)$  is the  stream  function and  \eqref{2DP} becomes 
\begin{equation}
\psi_{x_1x_1}\psi_{x_2x_2}-\psi_{x_1x_2}^2 = \frac{\Delta p}{2} .
\label{2DE}
\end{equation}
Equation (\ref{2DP})  is normally used  as a Poisson equation  for the
pressure. However, formally,  we may consider (\ref{2DE}) to be  an equation of
Monge-Amp\`ere type for  the stream function, if the  Laplacian of the
pressure  is  given.  Roulstone  {\em   et  al.}  (2009a)  discuss  the
relationship between  the sign of  the Laplacian of the  pressure, the
elliptic/hyperbolic  type of  the associated  MA equation,
and the balance between rate of strain and the enstrophy of the flow. This is related to the Weiss criterion (Weiss 1991), as discussed by Larcheveque (1993).

\subsection{Hypersymplectic geometry}

Following Section 2.1, $\eqref{2DE}$ can be associated with a Monge-Amp\`ere structure $(\Omega,\omega)$ on the
phase    space     $M=T^*\mathbb{R}^2$,    where    $\Omega=dx_1\wedge
du_2+du_1\wedge dx_2$ 
is a symplectic form (which is written conventionally in the form $\Omega = dx_1\wedge d\xi_1 + dx_2\wedge d\xi_2$) and $\omega\in \Omega^2(M)$  is the
$2$-form defined by 
$$
\omega = du_1\wedge du_2 - a(x_1, x_2) dx_1\wedge dx_2
$$
with 
$$
a(x_1,x_2, t)=\frac{\Delta p}{2} .
$$
(Note that time, $t$, is a parameter in our examples, and its presence
will not be indicated, unless crucial.) 

We observe that
\begin{equation}
d\omega=0
\label{do}
\end{equation}
and
\begin{equation}
\omega\wedge \omega = a \,\Omega\wedge \Omega .
\label{pf}
\end{equation}
Moreover,  the  tensor $A_\omega$  (cf. (\ref{eq:Iw})) defined  by $\omega(\cdot,\cdot)  =
\Omega(A_\omega\cdot,\cdot)$ satisfies 
$$
A_\omega^2 = -a .
$$

Following Roubtsov and Roulstone (2001) (see also Kossowski (1992)), we define a symmetric tensor $g_\omega$ on $M$
$$
g_\omega(X,Y)=\frac{2\left(\iota_X\omega\wedge
    \iota_Y\Omega+\iota_Y\omega\wedge   \iota_X\Omega\right)\wedge  dx_1
  \wedge dx_2}{\Omega^2} .
$$
In coordinates, one has 
\begin{equation}
\label{g}
g_\omega = dx_1\otimes du_2  - dx_2\otimes du_1 
\end{equation}
and therefore the signature of $g_\omega$ is $(2,2)$.
Using this split metric we define a dual form $\hat{\omega}\in
\Lambda^2(M)$ 
$$
\hat{\omega}(\cdot,\cdot) = g_\omega(A_\omega\cdot,\cdot) ,
$$
which in coordinates is
\begin{equation}
\label{ho}
\hat{\omega} =-du_1\wedge du_2-a(x_1,x_2)dx_1\wedge dx_2 .
\end{equation}
This dual form satisfies 
\begin{equation}
d\hat{\omega}=0
\label{dho}
\end{equation}
and
\begin{equation}
\hat{\omega}\wedge \hat{\omega} = - a \,\Omega\wedge \Omega .
\label{pfho}
\end{equation}

Hence, introducing  the normalized  form $\tilde{\Omega}  = \sqrt{|a|}
\Omega$, we obtain a triple $\left(\tilde{\Omega},
  \omega,\hat{\omega}\right)$, which is  hypersymplectic (Hitchin 1990) when $\Delta p
\neq 0$: 
$$
\begin{array}{ccc}
\omega^2 = -\hat{\omega}^2, & \omega^2 = \varepsilon
\tilde{\Omega}^2, & \hat{\omega}^2 = -\varepsilon \tilde{\Omega}^2 ,\\
\omega  \wedge  \hat{\omega} =  0,  &  \omega\wedge \tilde{\Omega}=0,  &
\hat{\omega}\wedge \tilde{\Omega}=0 ,\\
\end{array}
$$
with $\varepsilon=1$ if $\Delta p>0$ and $\varepsilon=-1$ if $\Delta p<0$.

Equivalently, we obtain three tensors $S$, $I$ and $T$ defined by 
$$
\begin{array}{ccc}
\hat{\omega}(\cdot,\cdot) = \omega(S\cdot,\cdot), & \omega(\cdot,\cdot)
=    \tilde{\Omega}(I\cdot,\cdot),   &    \hat{\omega}(\cdot,\cdot)   =
\tilde{\Omega}(T\cdot,\cdot) ,\\
\end{array}
$$
which satisfy
$$
\begin{aligned}
&S^2 = 1, I^2 = -\varepsilon, T^2=\varepsilon ,\\
&TI=-IT=S,\\
&TS=-ST=I,\\
&IS=-SI=T.\\
\end{aligned}
$$
 This structure should be compared with the hyper-K\"ahler properties of the SG equations (Delahaies and Roulstone 2010), and it extends the results presented in Roulstone {\em et al.} (2009a).

Moreover, since 
$$
d\omega=d\hat{\omega} = 0\;\;\mbox{and}\;\; d\tilde{\Omega} = \frac{1}{2\sqrt{a}} \,da\wedge \Omega ,
$$
we deduce from Lychagin-Roubtsov theorem (see Proposition 1) that 
\begin{enumerate}[i)]
\item the product structure $S$ is always integrable
\item the complex structure $I$ and the product structure $T$ (or vice
  versa) are integrable if and only if $\Delta p$ is constant.
\end{enumerate}

\subsection{Generalized solutions}

A generalized solution of  \eqref{2DE} is a 2d-submanifold $L^2\subset
M^4$   which   is   bilagrangian   with  respect   to   $\omega$   and
$\Omega$. Since $\omega =  \Omega(I,\cdot,\cdot)$, it is equivalent to
saying that 
$L$ is  closed under $I$:  for any non  vanishing vector field  $X$ on
$L$, $\{X,IX\}$ is a local frame of $L$.  

\begin{Prop}  Let $L$  be a  generalized solution  and $h_\omega$  the
  restriction of $g_\omega$ on $L$.
\begin{enumerate}
\item if $\Delta p>0$, the  metric $h_\omega$ has signature $(2,0)$ or
  $(0,2)$ 
\item if $\Delta p<0$, the metric $h_\omega$ has signature $(1,1)$.
\end{enumerate}
\end{Prop}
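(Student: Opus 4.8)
The plan is to compute the Gram matrix of $h_\omega$ in a frame of $L$ adapted to the structure $I$, reducing everything to two algebraic identities linking $g_\omega$ and $A_\omega$, and then to check non-degeneracy in coordinates. First I would record the identities. Since $\hat\omega(\cdot,\cdot)=g_\omega(A_\omega\cdot,\cdot)$ and $\hat\omega$ is a $2$-form, $g_\omega(A_\omega X,X)=\hat\omega(X,X)=0$; by symmetry of $g_\omega$ this gives $g_\omega(X,A_\omega X)=0$, hence $g_\omega(X,IX)=0$ for every vector field $X$, because $A_\omega=\sqrt{|a|}\,I$. Next, using $A_\omega^2=-a$ together with the skew-symmetry of $\hat\omega$ and the symmetry of $g_\omega$, $g_\omega(A_\omega X,A_\omega Y)=\hat\omega(X,A_\omega Y)=-\hat\omega(A_\omega Y,X)=-g_\omega(A_\omega^2 Y,X)=a\,g_\omega(X,Y)$, so that $g_\omega(IX,IY)=\varepsilon\,g_\omega(X,Y)$, where $\varepsilon=1$ if $\Delta p>0$ and $\varepsilon=-1$ if $\Delta p<0$.

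Now let $X$ be a non-vanishing vector field on the generalized solution $L$. Since $L$ is closed under $I$, the pair $\{X,IX\}$ is a local frame of $L$, and by the two identities above the matrix of $h_\omega$ in this frame is the diagonal matrix with entries $g_\omega(X,X)$ and $\varepsilon\,g_\omega(X,X)$; its determinant is $\varepsilon\,g_\omega(X,X)^2$. Consequently, as soon as $h_\omega$ is non-degenerate (equivalently $g_\omega(X,X)\neq0$), its signature is $(2,0)$ or $(0,2)$ when $\varepsilon=+1$, i.e. $\Delta p>0$, and it is $(1,1)$ when $\varepsilon=-1$, i.e. $\Delta p<0$, which is precisely the assertion.

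It remains to establish the non-degeneracy, and here I would pass to coordinates. Writing $L$ locally as a graph $u_i=u_i(x_1,x_2)$, the conditions $\Omega|_L=0$ and $\omega|_L=0$ become, respectively, $u_{11}+u_{22}=0$ (incompressibility) and $u_{11}u_{22}-u_{12}u_{21}=a=\tfrac12\Delta p$, i.e.\ the Monge-Amp\`ere equation \eqref{2DE}. Evaluating $g_\omega$ in the associated coordinate frame $\{\partial_{x_1}+u_{11}\partial_{u_1}+u_{21}\partial_{u_2},\ \partial_{x_2}+u_{12}\partial_{u_1}+u_{22}\partial_{u_2}\}$ of $L$ and substituting these two relations, a direct computation shows that $\det h_\omega$ equals a positive multiple of $u_{11}u_{22}-u_{12}u_{21}=\tfrac12\Delta p$. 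Hence $\det h_\omega\neq0$ whenever $\Delta p\neq0$, and its sign is that of $\Delta p$; this both supplies the missing non-degeneracy and re-derives the signature dichotomy on its own.

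The only delicate point is this last step. A generalized solution need not project submersively onto the $(x_1,x_2)$-plane at every point, and on such an exceptional ``vertical'' locus $h_\omega$ can genuinely degenerate; so the statement is to be read at points where $L$ is a local graph over the base, which is the open dense set on which the frame computation above applies verbatim. Granting this, the argument is entirely algebraic: two tensor identities for $g_\omega$ under $I$, plus a one-line coordinate evaluation of the determinant.
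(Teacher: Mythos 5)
Your algebraic core coincides with the paper's proof: the paper also fixes a nonvanishing $X$ tangent to $L$, works in the frame $\{X,IX\}$, and uses exactly your two identities (there phrased via $\alpha=\hat{\omega}(X,IX)$, with $g_\omega(X,IX)=\hat{\omega}(X,X)=0$ and $g_\omega(IX,IX)=\varepsilon\, g_\omega(X,X)$), arriving at the same diagonal Gram matrix $\frac{\alpha}{\sqrt{|a|}}\,\mathrm{diag}(\varepsilon,1)$. Where you genuinely diverge is the non-degeneracy step. The paper stays intrinsic: it observes $\omega(X,X)=\omega(X,IX)=\omega(X,SX)=0$ while $\omega(X,TX)=-\alpha$, and deduces $\alpha\neq 0$ from the non-degeneracy of the $2$-form $\omega$ (valid whenever $\Delta p\neq0$), using the claim that $\{X,IX,SX,TX\}$ frames $M^4$; this requires no graph assumption and is meant to hold at every point of $L$. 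You instead pass to a graph representation, where $\Omega|_L=0$ and $\omega|_L=0$ give $u_{11}+u_{22}=0$ and $u_{11}u_{22}-u_{12}u_{21}=a$, and obtain $\det h_\omega$ as a positive multiple of $a=\tfrac12\Delta p$; that is more elementary but only covers points where $L$ projects submersively to the base. Your caveat about the vertical locus is, in fact, well taken rather than a defect: the paper's frame claim fails precisely when $SX=\pm X$ (then $TX=\mp IX$), and this does occur --- the plane spanned by $\partial_{x_1}$ and $\partial_{u_1}$ is bilagrangian and $I$-invariant, and on the generalized solution $L=\{x_2=\mathrm{const},\,u_2=\mathrm{const}\}$ (with $a$ a nonzero constant) one has $h_\omega\equiv 0$, so the statement genuinely needs the transversality/graph proviso you impose. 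The only inaccurate side remark is that the graph locus is an open dense subset of any generalized solution: the example just given is entirely vertical, so that claim is unjustified, though it does not affect the correctness of your argument at the points where it applies.
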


\begin{proof}
Let  $X$ a  local non  vanishing vector  field, tangent  to $L$.  Then
$\{X,IX\}$ is a local frame of  $L^2$ and $\{X,IX,SX,TX\}$ a local frame
of $M^4$.
Let $\alpha =\hat{\omega}(X,IX)$. Then we have
\begin{enumerate}[i)]
\item $\omega(X,X)=0$ by skew-symmetry .
\item $\omega(X,IX)=0$ since $X$ and $IX$ are tangent to $L$.
\item $\omega(X,SX)=-\omega(SX,X)=\hat{\omega}(X,X)=0$.
\item $\omega(X,TX)=\omega(SIX,X)=\hat{\omega}(IX,X)=-\alpha$.
\end{enumerate}
Therefore, since $\omega$ is non degenerate, $\alpha$ cannot be $0$.
Moreover, 
\begin{enumerate}[i)]
\item $\displaystyle g_\omega(IX,IX) = \frac{1}{|a|}g_\omega(A_\omega
  X,A_\omega    X)=    \frac{1}{|a|}\hat{\omega}(X,A_\omega    X)    =
  \frac{\alpha}{\sqrt{|a|}}$ 
\item            $\displaystyle           g_\omega(IX,X)            =
  \frac{1}{\sqrt{|a|}}g_\omega(A_\omega                          X,X)=
  \frac{1}{\sqrt{|a|}}\hat{\omega}(X,X)=0$ 
\item $\displaystyle g_\omega(X,X)  = -\frac{1}{a} g_\omega(A^2_\omega
  X,   X)=   -\varepsilon\frac{1}{|a|}\hat{\omega}(A_\omega   X,X)   =
  \frac{\alpha \varepsilon}{\sqrt{|a|}}$
\end{enumerate}

We therefore obtain the matrix of $h_\omega$ in the representation $\{X,IX\}$:
$$
h_\omega = \frac{\alpha}{\sqrt{|a|}}\begin{pmatrix} \varepsilon  & 0 \\ 0
  &1\end{pmatrix} 
$$
\end{proof}

\begin{Rem}
If  $L=L_\psi$  is  a   regular  solution,  then  the  induced  metric
$h_\omega$ is affine. Indeed its tangent space is
generated  by $\displaystyle  X_1 =  \partial_{x_1} -  \psi_{x_1x_2} \partial_{u_1} +
\psi_{x_1x_1}\partial_{u_2}$ and $\displaystyle  X_2 =  \partial_{x_2} -  \psi_{x_2x_2} \partial_{u_1} +
\psi_{x_1x_2}\partial_{u_2}$. The induced metric on $L_\psi$ is therefore
\beq
\label{R1}
h_\omega=2\begin{pmatrix}\psi_{xx}&     \psi_{xy}\\     \psi_{xy}     &
  \psi_{yy}\end{pmatrix} 
\eeq
and consequently the invariants of this tensor are
$$\det(h_\omega)=2\Delta p, \;\;\;\tr(h_\omega) = 2\Delta \psi .$$
\end{Rem}

We  note that  metric, as  written in  (\ref{R1}), is  related to  the
velocity gradient tensor  (VGT) of a two-dimensional flow. The signature of the induced metric changes when the sign of the Laplacian of the pressure changes, and this is related to the flip between the elliptic/hyperbolic nature of (\ref{2DE}) when viewed as 
a MA equation for $\psi$.

The VGT is
not a  symmetric tensor and  its trace is  the divergence of  the flow
(which  equals  zero  in  the   incompressible  case).  The  trace  of
$h_\omega$  is  proportional  to  the   vorticity  of  the  flow.  The
determinant of  the VGT is the  same as $\det(h_\omega)$, and  this is
precisely  the balance  between the  rate-of-strain and  the enstrophy
(vorticity squared). It  is possible to construct  equations of motion
for the  invariants of the VGT  and for the invariants  of $h_\omega$: see Roulstone {\em et al.} (2014) for further details.

\section{Monge-Amp\`ere structures in six dimensions}

\subsection{Geometry of 3-forms}

In six dimensions ($n=3$), there is again a correspondence between
real/complex geometry and ``nondegenerate'' Monge-Amp\`ere
structures.

Lychagin {\em et al.} (1993)  associated   with
a Monge-Amp\`ere structure $(\Omega,\omega)$ an invariant symmetric form 
$$
g_\omega(X,Y) = \frac{\iota_X\omega\wedge\iota_Y\omega\wedge \Omega}{\vol}
$$
whose signature  distinguishes the different orbits  of the symplectic
group action.

In a seminal paper on the geometry of 3-forms,
  Hitchin (2001) defined  the  notion  of  nondegenerate  3-forms  on a
  6-dimensional  space and  constructed a  scalar invariant,  which we
  call the {\em Hitchin pfaffian}, which is non zero for such nondegenerate
  3-forms.   Hitchin also defined an  invariant tensor  $A_\omega$ on  the
  phase space satisfying 
$$
A_\omega^2 = \lambda(\omega)\,Id .
$$
Note that in the nondegenerate case, $\displaystyle K_\omega
=\frac{A_\omega}{\sqrt{|\lambda(\omega)|}}$   is  a   product  structure  if
$\lambda(\omega)>0$ and a complex structure if $\lambda(\omega)<0$.

Following Hitchin {\em op. cit.}, we observe
\begin{enumerate}
\item $\lambda(\omega)>0$  if and only if  $\omega$ is the sum  of two
  decomposable forms; i.e.
$$
\omega=\alpha_1\wedge\alpha_2\wedge\alpha_3+\beta_1\wedge\beta_2\wedge\beta_3
$$
\item $\lambda(\omega)<0$  if and only if  $\omega$ is the sum  of two
  decomposable complex forms; i.e. 
$$
\omega=
(\alpha_1+i\beta_1)\wedge(\alpha_2+i\beta_2)\wedge(\alpha_3+i\beta_3)+
(\alpha_1-i\beta_1)\wedge(\alpha_2-i\beta_2)\wedge(\alpha_3-i\beta_3)
$$
\end{enumerate}

Banos (2002) noted that on a $6$-dimensional symplectic space,
nondegenerate $3$-forms correspond  to nondegenerate Lychagin-Roubtsov
symmetric tensors. Moreover the signature  of $g_\omega$ is $(3,3)$ if
$\lambda(\omega)>0$ and $(6,0)$ or $(4,2)$ if $\lambda(\omega)<0$.

Banos {\em op. cit.} established the following simple relation between the Hitchin
and the Lychagin-Roubsov invariants 
$$
g_\omega(A_\omega\cdot,\cdot) = \Omega(\cdot,\cdot) .
$$
There is a correspondence between non-degenerate    MA   structures
$(\Omega,\omega)$ and a ``real Calabi-Yau structure'' if the Hitchin pfaffian
  $\lambda(\omega)$ is positive, and a ``complex Calabi-Yau structure'' if the Hitchin pfaffian is negative. 

For example, the Monge-Amp\`ere structure associated with the
"real" MA equation in three variables $(x_1,x_2,x_3)$
$$
\hess(\phi)=1\, ,
$$
is the pair
$$
\begin{cases}
\Omega=dx_1\wedge d\xi_1+ dx_2\wedge d\xi_2 + dx_3\wedge d\xi_3&\\
\omega=d\xi_1\wedge d\xi_2\wedge d\xi_3 -dx_1\wedge dx_2\wedge dx_3 &\\
\end{cases}
$$
and the underlying real analogue of K\"ahler structure on $T^*\mathbb{R}^3$ 
in the coordinates $(x_1,x_2,x_3,\xi_1,\xi_2,\xi_3)$ is
$$
\begin{cases}
g_\omega=\begin{pmatrix} 0&Id\\Id&0\end{pmatrix} ,&\\
K_\omega=\begin{pmatrix} Id&0\\0&-Id\end{pmatrix} .
\end{cases}
$$
The Monge-Amp\`ere structure associated with the special
lagrangian equation
$$
\Delta \phi - \hess(\phi)=0
$$
is the pair
$$
\begin{cases}
\Omega=dx_1\wedge d\xi_1 + dx_2\wedge d\xi_2 + dx_3\wedge d\xi_3 &\\
\omega=\im\big( (dx_1 +id\xi_1)\wedge (dx_2 +id\xi_2)\wedge (dx_3 +id\xi_3)\big)&\\
\end{cases}
$$
and the underlying K\"ahler structure is the canonical K\"ahler
structure on $T^*\mathbb{R}^3=\mathbb{C}^3$:
$$
\begin{cases}
g_\omega=\begin{pmatrix} Id&0\\0&Id\end{pmatrix} ,&\\
K_\omega=\begin{pmatrix} 0&-Id\\Id&0\end{pmatrix} .&\\
\end{cases}
$$

As in the $4$-dimensional case, there is a clear connection between the
problem of local equivalence of MA equation in three variables and the
integrability problem of generalized Calabi-Yau structures on
$\mathbb{R}^6$ (Banos 2002):

\begin{Prop}
The three following assertions are equivalent:
\begin{enumerate}[i)]
\item the MA equation $\Delta_\omega=0$ is locally equivalent to one of the three
equations
$$
\begin{cases}
\hess(\phi)=1 ,&\\
\Delta\phi-\hess(\phi)=0 ,&\\
\square \phi+\hess(\phi)=0 ,&\\
\end{cases}
$$
\item the underlying generalized Calabi-Yau structure on $T^*\mathbb{R}^3$ is integrable,
\item the form $\frac{\omega}{\sqrt[4]{|\lambda(\omega)|}}$ and
its dual form (in the sense of Hitchin) are closed. Moreover, the metric $g_{\omega}$ should be flat.
\end{enumerate}
\end{Prop}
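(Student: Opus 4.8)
The plan is to prove the equivalence of the three assertions by the chain $(i) \Rightarrow (iii) \Rightarrow (ii) \Rightarrow (i)$, mimicking the architecture of the Lychagin--Roubtsov theorem (Proposition 1) in the 4-dimensional case but using Hitchin's invariants in place of the pfaffian. The key objects are the Hitchin pfaffian $\lambda(\omega)$, the associated tensor $A_\omega$ with $A_\omega^2 = \lambda(\omega)\,\mathrm{Id}$, the normalised tensor $K_\omega = A_\omega/\sqrt{|\lambda(\omega)|}$, the Lychagin--Roubtsov metric $g_\omega$ satisfying $g_\omega(A_\omega\cdot,\cdot)=\Omega(\cdot,\cdot)$, and the Hitchin-dual 3-form $\hat\omega$ built from $\omega$ via the complex (or real) structure $K_\omega$. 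The underlying ``generalized Calabi--Yau structure'' is the triple $(g_\omega, K_\omega, \Omega)$ together with the pair of 3-forms $(\omega,\hat\omega)$, and ``integrable'' means $K_\omega$ is an integrable complex/product structure, $g_\omega$ is K\"ahler (or para-K\"ahler) with respect to it, $\Omega$ is the associated symplectic form, and $g_\omega$ has vanishing curvature.

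For $(i)\Rightarrow(iii)$: if $\Delta_\omega=0$ is locally equivalent to one of the three flat model equations, then by naturality the whole induced structure $(\Omega,\omega,g_\omega,K_\omega,\hat\omega)$ is locally diffeomorphic (via a symplectomorphism realising the equivalence of MA equations) to the corresponding model structure listed in the two worked examples above, for which $g_\omega$ is the constant split or definite metric, $K_\omega$ is constant, and $\omega$, $\hat\omega$ have constant coefficients; hence $\omega/\sqrt[4]{|\lambda(\omega)|}$ and its Hitchin dual are closed and $g_\omega$ is flat. One subtlety: the three model equations must between them cover the normalisations $\lambda(\omega)>0$ (the $\hess(\phi)=1$ model) and $\lambda(\omega)<0$ (the special-Lagrangian and ultrahyperbolic models, distinguished by the signature $(6,0)$ versus $(4,2)$ of $g_\omega$), so one should note that the sign of $\lambda(\omega)$ and the signature of $g_\omega$ are the discrete invariants that select which model is the relevant one. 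For $(iii)\Rightarrow(ii)$: closedness of the normalised forms $\omega/\sqrt[4]{|\lambda|}$ and $\hat\omega/\sqrt[4]{|\lambda|}$ is exactly Hitchin's integrability condition for the (para-)complex structure $K_\omega$ (a $\mathrm{d}\omega=\mathrm{d}\hat\omega=0$ pair of stable forms defines an integrable structure), and combined with $d\Omega=0$ this gives that $(g_\omega,K_\omega,\Omega)$ is K\"ahler/para-K\"ahler; adjoining flatness of $g_\omega$ upgrades this to an integrable generalized Calabi--Yau structure in the sense used here. For $(ii)\Rightarrow(i)$: an integrable generalized Calabi--Yau structure with flat metric is locally isometric to one of the flat models $\mathbb{R}^{3,3}$, $\mathbb{R}^{6,0}$, or $\mathbb{R}^{4,2}$ with its standard parallel (para-)complex structure and holomorphic volume form; transporting the MA equation along this isometry identifies it with the model equation whose structure data matches, which is precisely one of the three listed equations.

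The main obstacle I expect is the flatness bookkeeping: unlike the 4-dimensional case, where integrability of $I_\omega$ plus closedness of the normalised form already forces local equivalence to $\Delta\phi=0$ or $\square\phi=0$, in six dimensions an integrable (para-)complex structure compatible with a closed non-degenerate $\Omega$ and a holomorphic volume form is merely a (para-)Calabi--Yau structure, which has a nontrivial moduli of local geometries; flatness of $g_\omega$ is the extra rigidity one needs, and the delicate point is to show that the closedness conditions in (iii) do \emph{not} already imply flatness (so that it is a genuinely separate hypothesis, as the statement's word ``Moreover'' signals) while still being exactly what is needed, together with flatness, to pin down the three normal forms. I would handle this by invoking the Newlander--Nirenberg theorem (or its para-complex analogue) for integrability of $K_\omega$, Darboux's theorem for $\Omega$, the existence of a local parallel orthonormal (para-)unitary frame from flatness of $g_\omega$, and then a holonomy/de~Rham argument to realise the splitting into the model, citing Banos (2002) for the technical identities relating $\mathrm{d}\omega$, $\mathrm{d}\hat\omega$ and the Levi--Civita connection of $g_\omega$.
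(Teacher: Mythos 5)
The first thing to note is that the paper contains no proof of this proposition: it is stated as a quotation of Banos (2002), where the ``generalized Calabi--Yau structure'' of assertion (ii) and its integrability are defined (the paper explicitly warns this notion is \emph{not} the Hitchin--Gualtieri one). So there is no in-paper argument to compare against, and your proposal has to stand on its own. Its overall shape --- the cycle (i)$\Rightarrow$(iii)$\Rightarrow$(ii)$\Rightarrow$(i), the identification of the sign of $\lambda(\omega)$ and the signature of $g_\omega$ as the discrete invariants selecting among $\hess(\phi)=1$, $\Delta\phi-\hess(\phi)=0$, $\square\phi+\hess(\phi)=0$, and the naturality argument for (i)$\Rightarrow$(iii) --- is sensible and consistent with how the result is obtained in Banos (2002). (One small point there: ``local equivalence'' of MA equations allows $\omega\mapsto f\omega$ for a nonvanishing function, and it is the quartic homogeneity of $\lambda$ that makes $\omega/\sqrt[4]{|\lambda(\omega)|}$ the right conformally invariant object; you should say this explicitly.)

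There are, however, two genuine gaps. First, you never fix what ``integrable generalized Calabi--Yau structure'' means, and your working definition bundles flatness of $g_\omega$ into (ii). With that definition (iii)$\Rightarrow$(ii) becomes essentially tautological and all the content migrates to (ii)$\Rightarrow$(i); with the opposite convention (integrability of $K_\omega$ plus closedness of the forms, no flatness), the implication (ii)$\Rightarrow$(iii) is simply false, since --- as you yourself observe --- a (para-)Calabi--Yau metric in six dimensions need not be flat. An equivalence cannot be verified against an undefined term, so this is not a cosmetic issue: the proposition stands or falls with Banos's precise definition, which you do not supply. Second, the decisive step (ii)/(iii)$\Rightarrow$(i) is not actually carried out. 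What is needed is a local \emph{symplectomorphism} bringing the pair $(\Omega,\omega)$ simultaneously to constant-coefficient normal form --- not merely a local isometry of $g_\omega$ --- and this requires showing that flatness of $g_\omega$ together with $d\omega=d\hat\omega=d\Omega=0$ forces the whole package $(\Omega,\omega,\hat\omega,K_\omega)$ to be parallel in flat coordinates, followed by the pointwise linear symplectic classification of nondegenerate effective 3-forms into exactly three orbits. You defer precisely this to ``a holonomy/de Rham argument'' and to ``citing Banos (2002)'' --- but that is the theorem to be proved, so as it stands the proposal is an outline whose crucial step is borrowed from the very source of the result.
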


The generalized Calabi-Yau structure discussed here is defined in Banos (2002). This notion is different from the generalized Calabi-Yau structures in sense of N. Hitchin and M. Gualtieri. These relations are clarified and discussed further in $\S$14.4
of Kosmann-Schwarzbach and Rubtsov (2010).

It is important to note that the geometry associated with a MA equation
$\Delta_\omega=0$  of real type ($\lambda(\omega)>0$) is
essentially real but it is very similar to the classic K\"ahler
geometry. In particular, when this geometry is integrable, there
exists a potential $\Phi$ and a coordinate system
$(x_i,u_i)_{i=1,2,3}$ on $T^*\mathbb{R}^6$ such that
$$
g_\omega=\underset{i,j}{\sum} \frac{\partial ^2 \Phi}{\partial
x_i\partial \xi_j} dx_i\cdot d\xi_j
$$
and
$$
\det\left(\frac{\partial^2 \Phi}{\partial x_j\partial
\xi_j}\right)=f(x)g(\xi).
$$

\subsection{Reduction principle in six dimensions}

The Marsden-Weinstein  reduction process is  a classical  tool in symplectic geometry to reduce spaces  with symmetries (see for example Marsden 1992). It  is  explained  in  Banos (2015)  how  the  theory  of  Monge-Amp\`ere operators  gives  a  geometric   formalism  in  which  the  symplectic reduction  process  of  Marsden-Weinstein  fits  naturally  to  reduce equations with symmetries, and we recall the basic ideas here.

Let us  consider a Hamiltonian action  $\lambda: G\times M\rightarrow
M$ of a $1$-dimensional  Lie group $G$ ($G=\mathbb{R}$, $\mathbb{R}^*$
or $S^1$) on  our symplectic manifold $(M=T^*\mathbb{R}^3,\Omega)$. We
denote by  $X$ the  infinitesimal generator of  this action  and $\mu:
M\rightarrow \mathbb{R}$ is the moment map of this action:
$$
\iota_X\Omega = -d\mu .
$$

For  a regular  value  $c$, we  know  from the Mardsen-Weinstein  reduction
theorem  that the  $4$-dimensional  reduced space  $M_c=\mu^{-1}(c)/G$
admits a natural symplectic  form $\Omega_c$ such that $\pi^*\Omega_c$
is   the    restriction   of   $\Omega$    on   $\mu^{-1}(c)$,   where
$\pi:\mu^{-1}(c)\rightarrow M_c$ is the natural projection.

Assume moreover that the action  $\lambda$ preserves also the $3$-form
$\omega$:
$$
\forall g\in G,\;\; \lambda_g^*(\omega) = \omega .
$$
Then there exists a $2$-form $\omega_c$ on $M_c$ such that
$$\pi^*(\omega_c)=\iota_X \omega .
$$

We then  obtain a  Monge-Ampère structure  $(\Omega_c,\omega_c)$ on  the
reduced space $M_c$,  and there is a  one-to-one correspondence between
regular  solutions  $L_c$ on  $M_c$  (that  is $\Omega_c|_{L_c}=0$  and
$\omega_c|_{L_c}=0$)    and   $G$-invariant    generalized   solutions
$L=\pi^{-1}(L_c)$ on $M$.

{\bf Example}
We consider as a trivial example the $3D$-Laplace equation 
$$
\phi_{x_1x_1}+\phi_{x_2x_2}+\phi_{x_3x_3}=0 .
$$
The corresponding Monge-Ampère structure is $(\Omega,\omega)$, which in local coordinates $x_i, \xi_i$ is
$$
\Omega=dx_1\wedge d\xi_1+dx_2\wedge d\xi_2+dx_3\wedge d\xi_3
$$
and 
$$
\omega=d\xi_1\wedge     dx_2\wedge    dx_3+     dx_1\wedge    d\xi_2\wedge
dx_3+dx_1\wedge dx_2\wedge d\xi_3 .
$$

Let $G=\mathbb{R}$  act on  $T^*\mathbb{R}^3$ by translation  on the
third coordinate:
$$
\lambda_\tau (x_1,x_2,x_3,\xi_1,\xi_2,\xi_3) = (x_1,x_2,x_3+\tau,\xi_1,\xi_2,\xi_3) .
$$
This    action   is    trivially   Hamiltonian    with   moment    map
$\mu(x,\xi )=-\xi_3$   and   infinitesimal  generator   $\displaystyle
X=\lambda_*\left(\frac{d}{d\tau}\right) =\frac{\partial}{\partial x_3}$.

For $c\in \mathbb{R}$, the reduced space $M_c$ is canonically 
$\mathbb{R}^4$   and    the   reduced   Monge-Ampère    structure   is
$(\Omega_c,\omega_c)$
 with
$$
\Omega_c=dx_1\wedge d\xi_1+dx_2\wedge d\xi_2
$$
and 
$$
\omega_c =  \iota_{\frac{\partial}{\partial x_3}} \omega  = d\xi_1\wedge
dx_2+dx_1 \wedge d\xi_2 .
$$
Then we obtain the Laplace equation in two dimensions as the reduced equation.

\section{The geometry of Burgers'-type vortices in three dimensions}

\subsection{3D-2D reduction in fluid dynamics: introduction}

It is well known in fluid dynamics that certain solutions of the incompressible Euler (and Navier--Stokes) equations in three dimensions can be mapped to solutions of the incompressible equations in two dimensions using the so-called Lundgren transformation. The class of solutions for which this mapping is possible is characterised by the linearity of the third component of velocity in the third spatial coordinate, with the other two components of the velocity being independent of the third spatial coordinate. The canonical Burgers' vortex is a member of this family of solutions, and in this section we show how the formulation described in Section~2.2 can be applied to illuminate the geometry of these solutions.

Joyce  (2005) used  the reduction  principle to  construct examples  of
$U(1)$- invariant special lagrangian manifolds in $\mathbb{C}^3$ after
reducing  the  6-dimensional  ``complex"  Calabi-Yau  structure.   The
$4$-dimensional geometry  can be seen  as a symplectic reduction  of a
$6$-dimensional ``real Calabi-Yau'' geometry, which we describe now. 

In this section we are  concerned with solutions of the incompressible
Euler equations  of Burgers'-type. That  is, we consider flows  of the
form 
\beq
\label{Bu}
u_i  = (-\gamma(t)x_1/2  - \psi_{x_2},  -\gamma(t)x_2/2 +  \psi_{x_1},
\gamma(t)x_3)^T, 
\eeq
which satisfy  incompressibility $u_{ii} =  0$, and $\psi$, a stream function, is independent of $x_3$. Here  $\gamma(t)$ is the strain, which is a
function of  time alone. We shall consider the  analogue of
(\ref{2DE}) for  this class of  flows in  three dimensions and,  to this end, it is
convenient to introduce a new stream function $\Psi(x_1, x_2, x_3, t)$
as follows 
\beq
\label{Psi}
\Psi = \psi(x_1, x_2, t) -\frac{3}{8}\gamma(t)^2 x_3^2 .
\eeq

\subsection{Extended real Calabi-Yau geometry}

We  now extend the  $4$-dimensional   geometry  associated   with $(\ref{2DE})$ to a $6$-dimensional geometry.

Using (\ref{Bu}) and (\ref{Psi}), (\ref{2DP}) becomes 
\begin{equation}
\Psi_{x_1x_1}\Psi_{x_2x_2}-\Psi_{x_1x_2}^2 +\Psi_{x_3x_3} = \dfrac{\Delta p}{2}
\label{E2DE}
\end{equation}
where the Laplace  operator on the right  hand side is
  in three dimensions, but note (from (\ref{Psi})) that $\Delta p$ is independent of $x_3$ (cf. Ohkitani and Gibbon 2000).

Denote by  $(\Omega,\varpi)$ the corresponding  Monge-Ampère structure
on $\mathbb{T}^*\mathbb{R}^3$, with $\Omega$ the canonical symplectic
form
$$
\Omega=dx_1\wedge d\xi_1 + dx_2 \wedge d\xi_2 + dx_3\wedge d \xi_3 ,
$$
where $\xi_i = \nabla_i\Psi$ and, following Roulstone  {\em et al.} (2009),  $\varpi$ is the effective $3$-form 
\begin{equation}
\varpi =  d\xi_1\wedge d\xi_2\wedge dx_3  -a dx_1\wedge dx_2\wedge  dx_3 +
dx_1\wedge 
dx_2\wedge d\xi_3,
\label{E2DEform}
\end{equation}
with $a(x_1,x_2) = \Delta p/2$.

The corresponding ``real Calabi-Yau''       structure       on       $T^*\mathbb{R}^6$       is
$$
\left(g_\varpi,K_\varpi,\Omega,
  \varpi+\hat{\varpi},\varpi-\hat{\varpi}\right)
$$
with
\begin{enumerate}[i)]
\item $\displaystyle g_\varpi = 2a dx_3\otimes dx_3 + dx_1\otimes d\xi_1 + dx_2\otimes d\xi_2 -dx_3\otimes
d\xi_3 \;\;\;\;\;\;\; \left(\varepsilon(g_\varpi)=(3,3)\right)$
\item $\displaystyle K_\varpi = \begin{pmatrix}
-1 & 0 & 0 & 0 & 0 & 0\\
0 & -1 & 0 & 0 & 0 & 0\\
0&0&1&0&0&0\\
0&0&0&1&0&0\\
0&0&0&0&1&0\\
0&0&2a&0&0&-1\\
\end{pmatrix} \;\;\;\;\;\;\; \mbox{and thus} \;\;\;\; K_\varpi^2 =1$ 
\item  $\displaystyle \varpi +\hat{\varpi}  = 2  d\xi_1\wedge d\xi_2  \wedge dx_3,
  \;\;\;\; \varpi -\hat{\varpi} = 2dx_1\wedge dx_2\wedge (d\xi_3-adx_3) ,$ 
\end{enumerate}
and the Hitchin pfaffian is $\lambda(\varpi)=1$.

\begin{Prop}
\begin{enumerate}
\item  This ``real  Calabi-Yau'' structure  is integrable:  locally there   exists a potential $F$ and coordinates $(x,\xi)$ in which
$$
g_\varpi   =  \sum_{j=1}^3\sum_{k=1}^3   \frac{\partial^2  F}{\partial
  x_j\partial \xi_k}dx_j\cdot d\xi_k
$$
and 
$$
\Omega = \sum_{j=1}^3\sum_{k=1}^3   \frac{\partial^2  F}{\partial
  x_j\partial \xi_k}dx_j\wedge d\xi_k
$$
\item The pseudo-metric $g_\varpi$ is Ricci-flat.
\item It is flat if and only if $\Delta p = \alpha_1 x_1+\alpha_2 x_2+ \beta$.
\end{enumerate}
\end{Prop}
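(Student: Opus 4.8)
The plan is to handle item~1 through the first-order integrability criterion for Monge--Amp\`ere/Calabi--Yau structures, and items~2--3 by recognising that $g_\varpi$ is, up to a linear change of variables, a pp-wave metric whose profile function depends on only ``half'' of its transverse variables. For item~1 I would first check that $\varpi$ of \eqref{E2DEform} and its Hitchin companion $\hat\varpi = d\xi_1\wedge d\xi_2\wedge dx_3 + a\,dx_1\wedge dx_2\wedge dx_3 - dx_1\wedge dx_2\wedge d\xi_3$ (read off from the given $\varpi\pm\hat\varpi$) are both closed: the only non-constant coefficient in either form multiplies $dx_1\wedge dx_2\wedge dx_3$, and since $a=a(x_1,x_2)$ one has $da\wedge dx_1\wedge dx_2=0$, so $d\varpi=d\hat\varpi=0$. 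With $d\Omega=0$ and the Hitchin pfaffian equal to the nonzero constant $\lambda(\varpi)=1$, this says that $K_\varpi$ (equal to $A_\varpi$ here) is an \emph{integrable} product structure; one also sees this directly, since its $(\pm1)$-eigenbundles are $\langle\partial_{\xi_1},\partial_{\xi_2},\partial_{x_3}+a\,\partial_{\xi_3}\rangle$ and $\langle\partial_{x_1},\partial_{x_2},\partial_{\xi_3}\rangle$, and both are involutive precisely because $a$ is independent of $x_3,\xi_1,\xi_2,\xi_3$. Hence we have a bi-Lagrangian (``real Calabi--Yau'') structure whose two transverse Lagrangian foliations are the level sets of $(x_3,\xi_1,\xi_2)$ and of $(x_1,x_2,\xi_3-a\,x_3)$; taking these as coordinates $(u_j;v_k)$, both $\Omega$ and $g_\varpi$ lose their $du\,du$ and $dv\,dv$ blocks (the former by the Lagrangian property, the latter because each leaf is $g_\varpi$-isotropic, which follows from $g_\varpi(K_\varpi\cdot,K_\varpi\cdot)=-g_\varpi(\cdot,\cdot)$), and a short computation identifies the common mixed coefficient matrix as a Hessian $\partial^2 F/\partial u_j\partial v_k$, with e.g. $F=\tfrac12 u_1^2\,a(v_1,v_2)+u_1v_3-u_2v_1-u_3v_2$ doing the job up to the overall normalization. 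That is item~1; I would stress that this is only first-order integrability and does not by itself give local equivalence to a flat model --- flatness is precisely the extra condition quantified in item~3.

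For items~2 and~3 the crucial observation is that the linear substitution $u=x_3,\ v=-\tfrac12\xi_3$ writes $g_\varpi$ as
$$
g_\varpi = 2\,du\,dv + 2a(x_1,x_2)\,du^2 + dx_1\,d\xi_1 + dx_2\,d\xi_2 ,
$$
i.e. a pp-wave metric with flat constant (split-signature) transverse part $dx_1d\xi_1+dx_2d\xi_2$ and profile $H=2a$, where $H$ depends neither on the null coordinate $u=x_3$ nor on its conjugate $v$, but only on the transverse coordinates $x_1,x_2$. The classical pp-wave curvature formulae then give that the only non-vanishing curvature components are $R_{uaub}\propto\partial_a\partial_b H$ (transverse $a,b$), hence the only non-vanishing Ricci component is $R_{uu}\propto g^{ab}\partial_a\partial_b H$; since $H=2a(x_1,x_2)$ is independent of $\xi_1,\xi_2$, this transverse Laplacian is a multiple of $\partial_{x_1}\partial_{\xi_1}H+\partial_{x_2}\partial_{\xi_2}H\equiv0$, so $g_\varpi$ is Ricci-flat with no condition on $\Delta p$ (item~2; alternatively one may just quote the general fact that the metric of a real Calabi--Yau structure is Ricci-flat). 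Flatness, on the other hand, requires $\partial_a\partial_b H=0$ for all transverse $a,b$; as the only surviving second partials are $H_{x_1x_1},H_{x_1x_2},H_{x_2x_2}$, this holds iff $H$ --- equivalently $a$, equivalently $\Delta p = 2a$ --- is affine in $(x_1,x_2)$, i.e. $\Delta p = \alpha_1 x_1 + \alpha_2 x_2 + \beta$ (item~3).

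Nothing here is conceptually deep; the work is bookkeeping, and I foresee two slightly delicate spots. One is fixing the split-signature and normalisation conventions so that $g_\varpi$ is genuinely recognised as a pp-wave metric and the textbook curvature formulae transfer verbatim. The other is, for item~1, actually carrying out the passage to bi-Lagrangian normal coordinates and verifying that $\Omega$ and $g_\varpi$ acquire the common Hessian form. Anyone wary of the pp-wave shortcut can instead compute the Levi-Civita connection of $g_\varpi$ directly: there are only a handful of non-zero Christoffel symbols, all linear in $a_{x_1},a_{x_2}$, from which $\mathrm{Ric}$ and the Riemann tensor are read off the second partials of $a$ --- the pp-wave remark merely packages this.
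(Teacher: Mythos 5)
Your proof is correct and follows essentially the same route as the paper's own (very terse) argument: integrability of item~1 from the closedness $d\varpi=d\hat{\varpi}=0$, and items~2--3 by a direct computation of the curvature of $g_\varpi$, which is flat precisely when $a=\Delta p/2$ is affine in $(x_1,x_2)$ while the Ricci curvature vanishes identically. Your Brinkmann/pp-wave normal form $2\,du\,dv+2a\,du^2+dx_1\,d\xi_1+dx_2\,d\xi_2$ and the explicit bi-Lagrangian coordinates with potential $F$ simply supply, in a clean and verifiable way, the computations the paper leaves implicit.
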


\begin{proof}
Since     $d\varpi=d\hat{\varpi}=0$,     we     obtain     immediately
integrability.  Direct  computations  show   that  the  pseudo  metric
$g_\varpi$  has a  curvature  tensor  which is  null  if  and only  if
$a(x_1,x_2)=   \Delta   p/2$    is   "affine",   whereas
  Ricci-curvature always vanishes.

\end{proof}

\begin{Rem}
From Proposition 3, we deduce that equation \eqref{E2DE} is equivalent
to $\hess  \psi =1$ if and  only if $\Delta p  = \alpha_1 x_1+\alpha_2
x_2+ \beta$, where $\alpha_1, \alpha_2$ and $\beta$ are constants.
\end{Rem}

\subsection{The symplectic reduction}

Let $G=\mathbb{R}$  acting on  $T^*\mathbb{R}^3$ by translation  on the
third coordinate:
$$
\lambda_\tau (x_1,x_2,x_3,\xi_1,\xi_2,\xi_3) = (x_1,x_2,x_3+\tau,\xi_1,\xi_2,\xi_3)
$$
This    action   is    trivially   hamiltonian    with   moment    map
$\mu(x,\xi)=-\xi_3$   and   infinitesimal  generator   $\displaystyle
X=\lambda_*\left(\frac{d}{d\tau}\right) =\frac{\partial}{\partial x_3}$.

For $c\in  \mathbb{R}$, the reduced  space $M_c$ is the  quotient space
$$\begin{aligned}
M_c&=\mu^{-1}(c)/\mathbb{R}\\
&       =        \left\{(x_1,x_2,x_3,\xi_1,\xi_2,-c),       \;
  (x_1,x_2,x_3,\xi_1,\xi_2)\in\mathbb{R}^5\right\}/x_3\sim x_3+\tau \\
&\cong
\left\{(x_1,x_2,\xi_1,\xi_2,-c),\; (x_1,x_2,\xi_1,\xi_2)\in\mathbb{R}^4\right\}
\end{aligned}
$$

The vector $\displaystyle Y=K_\varpi X = \frac{\partial}{\partial x_3}
+2a\frac{\partial}{\partial    \xi_3}$     satisfies    $\Omega(X,Y)=2a$
with  $a(x_1,x_2)=\Delta   p  /   2$. Therefore, 
when $\Delta p\neq 0$, one can define a $4$-dimensional distribution
$D$ which is $\Omega$-orthogonal to $\mathbb{R}X\oplus \mathbb{R}Y$:
$$
T_m\mathbb{R}^6     =    D_m     \oplus_{\bot}\left(\mathbb{R}X_m\oplus
  \mathbb{R}Y_m\right) 
$$
with $m = (x,\xi)\in\mathbb{R}^6 = T^*\mathbb{R}^3$, and $\Omega$ and $\varpi$ decompose as follows:
$$
\begin{cases}
\Omega= \Omega_c -\dfrac{1}{2a} \iota_X\Omega \wedge \iota_Y\Omega&\\
&\\
\varpi    =    \varpi_1\wedge     \iota_X\Omega    +    \varpi_2\wedge
\iota_Y\Omega .&\\
\end{cases}
$$
Here, $\Omega_c, \varpi_1$ and $\varpi_2$ in $\Omega^2(D)$ are defined by 
$$
\begin{cases}
\Omega_c=dx_1\wedge d\xi_1 + dx_2\wedge d\xi_2&\\
&\\
\varpi_1 = -\dfrac{1}{2a} (d\xi_1\wedge d\xi_2 - adx_1\wedge dx_2),&\\
&\\
\varpi_2 = \dfrac{1}{2a}(d\xi_1\wedge d\xi_2 + a dx_1\wedge dx_2).&\\
\end{cases}
$$

Using  the natural  isomorphism  $T_{[m]}M_c \cong  D_m$,  we get  a
triple  $(\Omega_c,\varpi_1,\varpi_2)$  on $M_c$  which  is,  up to  a
re-normalization,  the  hypersymplectic   triple  on  $T^*\mathbb{R}^4$ 
associated with the $2D$-Euler equation (cf. $\S$3.2).

\section{Reduction of incompressible 3D-Euler equations}

\subsection{Introduction: Monge-Amp\`ere geometry of a non-Monge-Amp\`ere equation}

We now look at more  general solutions to the incompressible equations
in three dimensions. The Burgers' vortex is a special solution in that we may define a stream function (\ref{Psi}) and use the theory described in $\S$5.2 
and Banos (2002). However, for general incompressible flows in three dimensions a single scalar stream function is not available and we have to work with a 
vector potential. Therefore we should  stress that there is no underlying MA equation. We now proceed to show how a geometry, very similar to the one described earlier, can be recovered for three-dimensional flows with symmetry. Our results extend those of Roulstone {\em et al.} (2009b).

We introduce the following 3-forms $\omega$ and $\theta$ on the phase
space $T^*\mathbb{R}^3$, with local coordinates $(x_i, u_i)$ (so here we treat the $u_i$ as a velocity, and these coordinates replace the $\xi_i$; more rigorously, we have identified $T\mathbb{R}^3$ and $T^*\mathbb{R}^3$ using the euclidean metric):
\begin{equation}
\label{3Domega}
\omega = a\,dx_1\wedge  dx_2\wedge dx_3 - du_1 \wedge
du_2 \wedge  dx_3 - du_1 \wedge  dx_2 \wedge du_3 -  dx_1 \wedge du_2
\wedge du_3 
\end{equation}
with  $$a=\frac{\Delta p}{2}$$ 
and 
\begin{equation}
\label{3Dtheta}
\theta = du_1 \wedge dx_2 \wedge dx_3 + dx_1 \wedge du_2 \wedge dx_3 +
dx_1 \wedge dx_2 \wedge du_3 .
\end{equation}

Considering        $u=(u_1,u_2,u_3)$        as       a        $1$-form
$u=u_1dx_1+u_2dx_2+u_3dx_3$ on  $\mathbb{R}^3$, it is straightforward
to check  that $u$ is a solution of $-\Delta p = u_{ij}u_{ji},\;\;u_{ii}=0$ if
and  only  if   the  graph $L_u  =  \left\{(x,u)\right\}\subset
T^*\mathbb{R}^3$  is  bilagrangian with  respect  to $\omega$  and
$\theta$; that is 
\begin{equation}
\label{3DEF}
\omega|_{L_u} = 0 \;\;\;\;\; \text{ and } \;\;\;\;\; \theta|_{L_u}=0 .
\end{equation}
We observe that  a solution here  is not
  necessarily  lagrangian with  respect to  the underlying  symplectic
  form $\Omega$, which is compatible with the fact that our equation is no longer of Monge-Amp\`ere form.

\subsection{Hitchin tensors and Lychagin-Roubtsov metrics}

Note that $$\omega\wedge \theta =  3\, \text{vol} \;\;\; \text{ with }
\;\;\;   \text{vol}=  dx_1\wedge   dx_2\wedge   dx_3\wedge  du_1\wedge
du_2\wedge du_3 .$$ 
We can
therefore define the following analogues of Hitchin tensors $K_\omega$ and $K_\theta$:
$$
<\alpha,K_\omega(X)>=              \frac{\iota_X(\omega)\wedge\omega\wedge
  \alpha}{\text{vol}}\;\;\; ,\;\;\; <\alpha,K_\theta(X)> =
\frac{\iota_X(\theta)\wedge\theta\wedge 
  \alpha}{\text{vol}} 
$$
for any $1$-form $\alpha$ and any vector field $X$.

In the $3\times 3$ decomposition $(x,u)$, these tensors are
\begin{equation}
\label{K}
K_\omega=   2\begin{pmatrix}   0  &   -\text{Id}\\   a\,  \text{Id}   &
  0\end{pmatrix}\;,\;\;\; K_\theta= 2\begin{pmatrix} 0 & 0\\ \text{Id}
  & 0\end{pmatrix} .
\end{equation}
We have the relations
\begin{equation}
\label{K_relations}
\begin{aligned}
K_\omega^2 &= -4a\,\text{Id}\\
K_\theta^2 &= 0\\
K_\omega K_\theta & +K_\theta K_\omega = -4\text{Id}\\
[K_\omega & ,K_\theta] = 4\begin{pmatrix} -\text{Id} & 0 \\ 0 &
  \text{Id}\end{pmatrix}\\
\end{aligned}
\end{equation}
Moreover, using the symplectic form $\Omega=dx_1\wedge du_1+dx_2\wedge
du_2  + dx_3\wedge du_3$,  one can  also define the analogues of Lychagin-Roubtsov
metrics  $g_\omega  =  \Omega(K_\omega\cdot,\cdot)$  and  $g_\theta  =
\Omega(K_\theta\cdot,\cdot)$:
\begin{equation}
\label{G}
g_\omega=\begin{pmatrix}     2a\,\text{Id}    &     0     \\    0     &
  2\text{Id}\end{pmatrix}   \;\;   ,   \;\;   g_\theta=\begin{pmatrix}
  2\text{Id} & 0 \\ 0 & 0 \\ \end{pmatrix} .
\end{equation}

\subsection{The symplectic reduction}

Let us consider  an hamiltonian action $\lambda$  of a $1$-dimensional
Lie group $G$ on our $6$-dimensional phase space with moment map $\mu$.
Assume moreover that
\begin{enumerate}
\item the action $\lambda$ preserves also $\omega$ and $\theta$:
$$
\forall g\in G,\lambda_g^* (\omega) = \omega\; \;\; \text{ and } \;\;
\lambda_g^* (\theta) = \theta
$$
\item the infinitesimal  generator $X$ satisfies $g_\omega(X,X)\neq 0$
  and $g_\theta(X,X)\neq 0$
\end{enumerate}
NB. We impose that the action preserves the
  symplectic form $\Omega$  to obtain a reduced space, but  as we already
  mentioned,  this symplectic  form  is additional data  in  this   context.

As  we  have  seen,  there  exists  a  pair  $(\omega_c,\theta_c)$  of
$2$-forms on $M_c$ 
such that 
\begin{equation}
\label{reduction}
\pi^*(\omega_c)=\iota_X \omega\;\;,\;\; \pi^*(\theta_c)=\iota_X \theta
\end{equation}

\begin{Rem} if $L_c$ is  a $2$-dimensional bilagrangian submanifold of
  $M_c$  (that is  $\omega_c|_{L_c}=0$  and $\theta_c|_{L_c}=0$)  then
  $L=\pi^{-1}(L_c)$  is ``bilagrangian'' with  respect to  $\omega$ and
  $\theta$.

Hence, there is a  correspondence between bilagrangian surfaces in the
reduced space $M_c$ and $G$-invariant solutions of \eqref{3DEF}.
\end{Rem}

\begin{Rem}
One can check that the largest group preserving $\Omega$, $\omega$ and
$\theta$ is $SO(3,\mathbb{R})$ acting on $M$ by 
$$
A\cdot (x,u) = (A\cdot x,A^{-1} \cdot u) .
$$
\end{Rem}

\subsection{Action by translation} 

Let $\mathbb{R}$ acting on $T^*\mathbb{R}^3$ by translation on $(x_3,u_3)$:
$$
\lambda_a(x_1,x_2,x_3,u_1,u_2,u_3) = (x_1,x_2,x_3+\tau,u_1,u_2,u_3+\gamma
\tau)\;,\;\;\;\; \gamma\in\mathbb{R}
$$
The infinitesimal generator is $$X=\lambda_*\left(\frac{d}{d\tau}\right)
= \frac{\partial}{\partial x_3} + 
\gamma \frac{\partial}{\partial u_3}$$ and the moment map is 
$$
\mu(x,u)=\gamma\,x_3-u_3 .
$$
We observe that $\mu = \mbox{constant}$ yields the linearity of $u_3$ in $x_3$, as defined in (\ref{Bu}).

For   $c\in\mathbb{R}$,   the  reduced   space   $M_c$  is   trivially
$T^*\mathbb{R}^2$:
$$
\begin{aligned}
M_c & = \left\{(x_1,x_2,x_3,u_1,u_2,\gamma x_3 -c\right\}/ x_3\sim
x_3+t\\
& =  \left\{(x_1,x_2,0,u_1,u_2,-c)\right\}
\end{aligned}
$$
and the pair $(\omega_c,\theta_c)=\left(\iota_X\omega,\iota_X\theta\right)$ is 
$$
\begin{cases}
\omega_c =  a\,dx_1\wedge dx_2 -  du_1\wedge du_2 -  \gamma du_1\wedge
dx_2 - \gamma dx_1\wedge du_2&\\
\theta_c = du_1\wedge dx_2+dx_1\wedge du_2+\gamma dx_1\wedge dx_2&\\
\end{cases}
$$

Considering the following change of variables 
$$
\begin{aligned}
X_1 = x_1\;, & \;\;\;\;\;\;\;\;U_1 = \frac{\gamma}{2} x_2 + u_2,\\
X_2 = x_2 \;,& \;\;\;\;\;\;\;\; U_2 = -\frac{\gamma}{2} x_1 - u_1,\\
\end{aligned}
$$
we obtain
\begin{equation}
\label{tc}
\theta_c=dX_1\wedge dU_1+dX_2\wedge dU_2
\end{equation}
and
\begin{equation}
\label{oc}
\omega_c = \omega_{0} -\frac{\gamma}{2}\theta_c
\end{equation}
with 
\begin{equation}
\label{o0}
\omega_{0} = \left(a+\frac{3}{4}\gamma^2\right)dX_1\wedge dX_2 - dU_1\wedge dU_2 .
\end{equation}

Hence we summarize our result as follows:
\begin{Prop}
If $\psi(x_1,x_2,t)$ is solution of the Monge-Ampère equation in two dimensions
\begin{equation}
\label{3DER1}
\psi_{x_1x_1}\psi_{x_2x_2}-\psi_{x_1x_2}^2     =      \frac{\Delta     p(x_1,x_2,t)}{2}     +
\frac{3}{4}\gamma^2(t) .
\end{equation}
then the velocity $u(x_1,x_2,x_3,t)$ defined by 
$$
\begin{aligned}
u_1& = -\frac{\gamma(t)}{2} x_1 -\psi_{x_2},\\
u_2&=  -\frac{\gamma(t)}{2} x_2 +\psi_{x_1},\\
u_3& = \gamma(t) x_3 -c(t),\\
\end{aligned}
$$
is solution of $-\Delta p = u_{ij}u_{ji}$ and $u_{ii} = 0$ in three dimensions.
\end{Prop}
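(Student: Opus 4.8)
The plan is to obtain the statement from the symplectic reduction of $\S$6.4, reading the velocity field off the reduced Monge-Amp\`ere structure rather than substituting blindly. Recall the $\mathbb{R}$-action $\lambda_\tau$ translating $(x_3,u_3)$, with generator $X=\partial_{x_3}+\gamma\,\partial_{u_3}$, moment map $\mu=\gamma x_3-u_3$, and reduced space $M_c=\mu^{-1}(c)/\mathbb{R}\cong T^*\mathbb{R}^2$ carrying the reduced pair $(\omega_c,\theta_c)=(\iota_X\omega,\iota_X\theta)$. Given $\psi(x_1,x_2,t)$, I would work in the adapted coordinates $(X_1,U_1,X_2,U_2)$ of $\S$6.4, in which $\theta_c=dX_1\wedge dU_1+dX_2\wedge dU_2$ and $\omega_c=\omega_0-\tfrac{\gamma}{2}\theta_c$ with $\omega_0=\big(a+\tfrac{3}{4}\gamma^2\big)\,dX_1\wedge dX_2-dU_1\wedge dU_2$, and consider the surface $L_c=\{U_i=\psi_{X_i}\}$, i.e.\ the graph of $d\psi$.

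Next I would check that $L_c$ is bilagrangian for $(\omega_c,\theta_c)$ precisely when $\psi$ solves \eqref{3DER1}. Being a Lagrangian graph for the canonical form, $\theta_c|_{L_c}=0$ automatically; and since $\theta_c$ restricts to zero, $\omega_c|_{L_c}=\omega_0|_{L_c}$, while a one-line computation with $dU_i=\psi_{X_iX_1}\,dX_1+\psi_{X_iX_2}\,dX_2$ gives $\omega_0|_{L_c}=\big((a+\tfrac{3}{4}\gamma^2)-(\psi_{X_1X_1}\psi_{X_2X_2}-\psi_{X_1X_2}^2)\big)\,dX_1\wedge dX_2$, which vanishes exactly under \eqref{3DER1} (with $a=\Delta p/2$ and $X_i=x_i$). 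By the correspondence following \eqref{reduction} (Remark~3), $L=\pi^{-1}(L_c)$ is then a $G$-invariant solution of \eqref{3DEF}, so the $1$-form $u$ whose graph is $L$ satisfies $-\Delta p=u_{ij}u_{ji}$ and $u_{ii}=0$ in three dimensions. Finally I would unwind the coordinate change: $L\subset\mu^{-1}(c)$ gives $u_3=\gamma x_3-c$, while $X_i=x_i$ together with $U_1=\tfrac{\gamma}{2}x_2+u_2=\psi_{x_1}$ and $U_2=-\tfrac{\gamma}{2}x_1-u_1=\psi_{x_2}$ solve to $u_1=-\tfrac{\gamma}{2}x_1-\psi_{x_2}$, $u_2=-\tfrac{\gamma}{2}x_2+\psi_{x_1}$ --- exactly the velocity displayed in the statement.

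I expect the only delicate point to be bookkeeping rather than mathematics: matching the pullback identities $\pi^*\omega_c=\iota_X\omega$, $\pi^*\theta_c=\iota_X\theta$ and the signs in $\omega_0$ through the coordinate change, together with verifying the hypotheses of the reduction principle --- preservation of $\omega$, $\theta$ (and $\Omega$) by $\lambda_\tau$, which is immediate from \eqref{3Domega}--\eqref{3Dtheta}, and the non-degeneracy of $X$, where \eqref{G} gives $g_\theta(X,X)=2\neq 0$ always and $g_\omega(X,X)=2(a+\gamma^2)$, non-zero away from the locus $a=-\gamma^2$ (on which one concludes by continuity in the parameters, or by the direct check below). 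Indeed one can bypass the reduction altogether: substituting the ansatz, $u_{ii}=(-\tfrac{\gamma}{2}-\psi_{x_1x_2})+(-\tfrac{\gamma}{2}+\psi_{x_1x_2})+\gamma=0$, and a short computation of the velocity-gradient contraction yields $u_{ij}u_{ji}=\tfrac{3}{2}\gamma^2-2\big(\psi_{x_1x_1}\psi_{x_2x_2}-\psi_{x_1x_2}^2\big)$, so that $-\Delta p=u_{ij}u_{ji}$ is equivalent to \eqref{3DER1}, which is the hypothesis.
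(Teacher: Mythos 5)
Your proposal is correct and follows essentially the same route as the paper: take the graph $L_\psi=\{U_i=\psi_{X_i}\}$ in the adapted coordinates of $\S$6.4, observe $\theta_c|_{L_\psi}=0$ and $\omega_0|_{L_\psi}=0$ exactly under \eqref{3DER1} so that $\omega_c|_{L_\psi}=0$ by \eqref{oc}, and then lift via $\pi^{-1}$ to a bilagrangian $L$ for $(\omega,\theta)$, which by $\S$6.1 is equivalent to $u$ solving $-\Delta p=u_{ij}u_{ji}$, $u_{ii}=0$. Your added direct substitution check (giving $u_{ij}u_{ji}=\tfrac{3}{2}\gamma^2-2(\psi_{x_1x_1}\psi_{x_2x_2}-\psi_{x_1x_2}^2)$) and your attention to the non-degeneracy hypothesis $g_\omega(X,X)\neq 0$ are correct refinements that the paper's terse proof leaves implicit, but they do not change the argument's structure.
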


\begin{proof}
Time $t$  is a parameter  here, hence $c=c(t)$  and $\gamma=\gamma(t)$
are constant.
If $\psi(x_1,x_2,t)$ is solution of \eqref{3DER1} then
$\omega_0|_{\psi}=\theta_c|_{L_\psi}=0$,      with       
$$L_\psi      =
\{(X_1,X_2,\psi_{X_1},\psi_{X_2})\}.$$ From \eqref{oc}, we deduce that
$\omega_c|_{L\psi}   =    \theta_c|_{L_\psi}   =0$,    and   therefore
$\omega|_L=\theta|_L = 0$ with 
$$L=\pi^{-1}(L_c)=\left\{(x_1,x_2,x_3,-\frac{\gamma}{2}x_1    -   \psi_{x_2},
  -\frac{\gamma}{2}x_1+\psi_{x_1}, \gamma x_3 - c)\right\} .
$$
\end{proof}

\section{Summary}

The Poisson equation for the pressure in the incompressible Euler and Navier--Stokes equations can be written in the form
\beq
\label{peqn}
\Delta p = \frac{\zeta^2}{2} - \mbox{Tr}S^2,
\eeq
where $\zeta$ is the vorticity and $S$ is the rate-of-strain matrix. Equation (\ref{peqn}) illuminates the relationship between the sign of the Laplacian of the pressure and the balance between vorticity and strain in the flow. Gibbon (2008) has conjectured that much might be learned from studying the geometric properties of this equation, which {\em locally} holds the key to the formation of vortical structures --- ubiquitous features of both turbulent flows and large-scale atmosphere/ocean dynamics.

In this paper we have studied (\ref{peqn}) through Monge-Amp\`ere structures and their associated geometries. Burgers' vortex, which has been described as the {\em sinew of turbulence}, is shown to arise naturally from a symmetry reduction of a Monge-Amp\`ere structure (Section 5). This Monge-Amp\`ere structure is itself derived from a symmetry reduction of the equations for incompressible flow in three dimensions ($\S$6.3), and mapping to a solution of 2d incompressible flow has been presented (Proposition 5). This is surely associated with the well-known Lundgren transformation. 

In contrast to earlier works by the same authors, we have shown explicity how the geometries associated with both signs of the Laplacian of the pressure can be described by almost-complex and almost-product structures. This points to a possible role for generalised geometry (Banos 2007), and this is a topic for future research.

\section{Acknowledgements}

The paper was started when I.R and V.N.R. were participants of the Isaac Newton Institute for Mathematical Science Programme "Mathematics for the Fluid Earth" (21st October to 20th December 2013).
They are grateful to the organisers and participants of the Programme for the invitation and  useful discussions, and to the administration and the staff of the Newton Institute for hospitality, stimulating atmosphere and support during their stay in Cambridge. 

The final results of the paper were obtained while all three authors were participants of the Clay Mathematics Institute (CMI) meeting "Geometry and Fluids"  at Oxford University (7th-11th April 2014). They are thankful to the organisers and the CMI for the invitation and possibility to present their result during the Meeting. The discussions with M. Gualtieri, N.J. Hitchin, J. McOrist and M. Wolf were very inspiring and useful.
The work of V.N.R. were (partly) supported by the grants RFBR-12-01-00525-a and  RFBR 15-01-05990.

\section{References}

\noindent
Banos, B  2002 Nondegenerate  Monge-Ampère structures in  dimension 6,
{\em Letters in Mathematical Physics}, {\bf 62}, 1-15 

\noindent
Banos, B. 2007
Monge-Ampere equations and generalized complex geometry. The two-dimensional case. {\em Journal of Geometry and Physics}, {\bf  57}, 841-853

\noindent 
Banos, B 2015 Symplectic Reduction  of Monge-Ampère equations, {\em in   preparation} 

\noindent
Delahaies, S and Roulstone, I (2010) Hyper-K\"ahler geometry and semi-geostrophic theory. {\em Proc. R. Soc. Lond.} A, {\bf 466}, 195 - 211.

\noindent
Gibbon, J.D. 2008 The three-dimensional Euler equations: Where do we stand? {\em Physica D}, {\bf 237}, 1894-1904. DOI: 10.1016/j.physd.2007.10.014

\noindent
Hitchin N. J. 1990 Hypersymplectic quotients, {\em Acta
  Acad. Sci. Tauriensis}, {\bf 124}, 169-80

\noindent
Hitchin N. J. 2001 The geometry of three-forms in
    six and seven dimensions, {\em J. Differential Geometry}, {\bf 55}, 547-76

\noindent
Joyce,    J    2005    $U(1)$-invariant    special    lagrangian
  $3$-folds: I. Nonsingular solutions, {\em Adv. Math.}, {\bf 192}, 35-71
  
\noindent
Kosmann-Schvarzbach, Y. Rubtsov, V. 2010 Compatible structures on Lie algebroids and Monge-Ampère operators.
\emph{ Acta Appl. Math.}, {\bf 109}, no. 1, 101-135 

\noindent
Kossowski, M. 1992 Prescribing invariants of lagrangian surfaces. {\em Topology}, {\bf 31}, 337-47

\noindent
Kushner, A., Lychagin, V. , Rubtsov, V. 2007 \emph{Contact geometry and non-linear differential equations.} Encyclopedia of Mathematics and its Applications, 101. Cambridge University Press, Cambridge, 2007. xxii+496 pp.

\noindent
Larchev\^eque, M. 1993 Pressure field, vorticity field, and coherent structures 
in two-dimensional incompressible turbulent flows.
{\em Theor. Comp. Fluid Dynamics}, {\bf 5}, 215-22.

\noindent
Lundgren, T. 1982 Strained spiral vortex model for turbulent
fine structure. \textit{Phys. Fluids}, \textbf{25}, 2193-2203.

\noindent
Lychagin, V. V. 1979 Nonlinear differential equations and contact geometry. 
{\em Usp\`ekhi Mat. Nauk.}, {\bf 34}, 137-65.

\noindent
Lychagin, V.V. \& Rubtsov, V 1983a On Sophus Lie
    Theorems for Monge-Amp\`ere equations, {\em Doklady Bielorrussian
  Academy of Science}, {\bf 27}, no.5, 396-398 (in Russian)

\noindent
Lychagin, V.V. \& Rubtsov, V. 1983b Local
    classifications of Monge-Amp\`ere equations, {\em Soviet. Math. Dokl},
  {\bf 272}, no.1, 34-38.

\noindent
Lychagin, V. V., Rubtsov, V. N. \& Chekalov, I. V. 1993 A classification 
of Monge--Amp\`ere equations. {\em Ann. Sci. Ec. Norm. Sup.}, {\bf 26}, 281-308.

\noindent
McIntyre, M. E. \& Roulstone, I. 2002 Are there
higher-accuracy analogues of semi-geostrophic theory? In {\em
Large-scale atmosphere---ocean dynamics, Vol. II: Geometric
methods and models}.  J. Norbury and I. Roulstone (eds.); Cambridge: University Press.

\noindent
Marsden J.E. 1992  \emph{Lectures on  Mechanics}, Cambridge
  University Press.

\noindent
Ohkitani, K and Gibbon, J.D. 2000 Numerical study of singularity formation in a class of Euler and Navier-Stokes flows, {\em Physics of Fluids}, {\bf 12}, 3181-94.

\noindent
Roubtsov, V. N. \& Roulstone, I. 2001 Holomorphic structures in hydrodynamical 
models of nearly geostrophic flow. {\em Proc. R. Soc. Lond.}, A {\bf 457}, 1519-1531.

\noindent
Roulstone, I, Banos, B, Gibbon, JD and Roubtsov, V 2009a K\"ahler geometry and Burgers' vortices, {\em Proc. Ukrainian Nat'l Acad. Math.}, {\bf 16}, 303 - 321.

\noindent
Roulstone, I., Banos, B., Gibbon, J.D. \& Rubtsov, V.N. 2009b A geometric interpretation of coherent structures in Navier-Stokes flows. {\em Proc. R. Soc. Lond. } A {\bf 465}, 2015-2021 

\noindent
Roulstone, I., Clough, S.A., \& White, A.A. 2014 Geometric invariants of the horizontal velocity gradient tensor and their dynamics in shallow water flow. {\em Quart. J. R. Meteorol. Soc.}, {\bf 140}, 2527-34

\noindent
Rubtsov, V. N. \& Roulstone, I. 1997 Examples of
quaternionic and K\"ahler structures in Hamiltonian models of
nearly geostrophic flow. {\em J. Phys. A}, {\bf 30}, L63-L68.

\noindent
Weiss J. 1991 The dynamics of enstrophy transfer in two-dimensional hydrodynamics. {\em Physica D}. {\bf 48}, 273-94.

\end{document}